\newcommand{\scp}[2]{\left.\left\langle #1\vphantom{#2}\right|#2\right\rangle}
\newcommand{\abs}[1]{|{#1}|}
\newcommand{\norm}[2][]{\left\|#2\right\|_{#1}}
\newcommand{\bigO}{\mathcal{O}}
\renewcommand{\AA}{\mathcal{A}}
\newcommand{\II}{\mathcal{I}}
\newcommand{\RR}{\mathbf{R}}
\newcommand{\doublehookrightarrow}%
{\DOTSB\lhook\joinrel\relbar\!\!\!\!\lhook\joinrel\rightarrow}
\DeclareMathOperator{\Id}{Id}
\DeclareMathOperator{\Sign}{Sign}
\DeclareMathOperator{\rg}{rg}
\newtheorem{lemma}{Lemma}
\newtheorem{corollary}[lemma]{Corollary}
\newtheorem{remark}[lemma]{Remark}
\title{Constructing test instances for Basis Pursuit Denoising}
\author{Dirk A. Lorenz\thanks{Institute for Analysis and Algebra, TU
    Braunschweig, 38092 Braunschweig, Germany, tel: +49-531-3917423
    \url{d.lorenz@tu-braunschweig.de}. Support under DFG grant LO
    1436/2-1 (project ``Sparsity and Compressed Sensing in Inverse
    Problems'') within the Priority Program SPP 1324 ``Extraction of
    quantitative information in complex systems'' and DGF grant LO
    1436/3-1 (project ``Sparse Exact and Approximate Recovery'')}}
\begin{document}
\maketitle

\begin{abstract}
  The number of available algorithms for the so-called Basis Pursuit
  Denoising problem (or the related LASSO-problem) is large and keeps
  growing. Similarly, the number of experiments to evaluate and
  compare these algorithms on different instances is growing.
  
  In this note, we present a method to produce instances with exact
  solutions which is based on a simple observation which is related to
  the so called \emph{source condition} from sparse regularization.
\end{abstract}

\textbf{EDICS:} DSP-RECO, DSP-ALGO

\section{Introduction}
\label{sec:introduction}

``Lately, there has been a lot of fuss about sparse approximation.''
is the beginning of the paper~\cite{tropp2006relax} from 2006 and this
note could have started with the same sentence. Three different
minimization problems have gained much attention. We
follow~\cite{vandenberg2008paretofrontiertbasispursuit} and denote
them as follows: For a matrix $A\in\RR^{k\times n}$ and $b\in\RR^k$
and positive numbers $\sigma$, $\lambda$ and $\tau$ we define the
Basis Pursuit Denoising (\cite{chen1998basispursuit}) with constraint by
\begin{equation}
  \label{eq:BPsigma}
  \tag{$\text{BP}_\sigma$}
  \min_x \norm[1]{x}\ \text{ subject to }\ \norm[2]{Ax-b}\leq \sigma,
\end{equation}
the Basis Pursuit Denoising with penalty (\cite{chen1998basispursuit}) by
\begin{equation}
  \label{eq:QPlambda}
  \tag{$\text{QP}_\lambda$}
  \min_x \tfrac12\norm[2]{Ax-b}^2 + \lambda\norm[1]{x},
\end{equation}
and the LASSO (least absolute shrinkage and selection
operator~\cite{tibshirani1996lasso}) by
\begin{equation}
  \label{eq:LStau}
  \tag{$\text{LS}_\tau$}
  \min_x \norm[2]{Ax-b}\ \text{ subject to } \norm[1]{x}\leq\tau.
\end{equation}
All three problems are related: if we denote with
$x_{\text{QP}}(\lambda)$ a solution of~\eqref{eq:QPlambda}, this also
solves~\eqref{eq:BPsigma} for
$\sigma=\norm[2]{Ax_{\text{QP}}(\lambda)-b}$ and~\eqref{eq:LStau} for
$\tau = \norm[1]{x_{\text{QP}}(\lambda)}$ (see
e.g.~\cite{vandenberg2008paretofrontiertbasispursuit,loris2009l1performance}).
However, this relation is implicit and relies in general on the
knowledge of the solutions.  Hence, it is not totally true that these
problems are equivalent.

One may argue, that~\eqref{eq:BPsigma} is harder than the other
problems since its objective is nonsmooth and shall be minimized over
a complicated convex set (e.g.~projecting on this set is
difficult). Moreover, one may argue, that~\eqref{eq:QPlambda} is
harder than~\eqref{eq:LStau} since the latter has a smooth objective
(to be minimized over a somehow simple convex set) while the first has
a nonsmooth objective. Computational experience with with these
problems lead to the same conclusion.

Recently, minimization problems similar to Basis Pursuit Denoising
have appeared in several contexts, e.g. group sparsity (or joint
sparsity)
\cite{vandenberg2009jointsparserecovery,fornasier2008jointsparsity,mishali2008reduceandboost}
for sparse recovery, nuclear norm minimization for low-rank matrix
recovery~\cite{recht2010nuclearnorm} to name just two.


\subsection{Notation}
With $\norm[p]{x}$ we denote the $p$-norm of a vector $x\in\RR^n$,
$A^T$ is the transpose of a matrix $A$, the range of a matrix $A$ is
denoted with $\rg A$ and with $\Sign(x)$ we denote the multivalued
sign, i.e.
\[
y\in\Sign(x) \iff y_i\
\begin{cases}
  = 1 & \text{if }\ x_i>0\\
  = -1 & \text{if }\ x_i<0\\
  \in [-1,1] & \text{if }\ x_i=0
\end{cases}.
\]

\section{Construction of instances with known solution}
\label{sec:constr-probl-with}

In this section we illustrate how instances (i.e.~tuples
$(A,b,\lambda)$) can be generated, such that the solution $x^*$
of~\eqref{eq:QPlambda} is known up to machine precision. This is
achieved by prescribing the solution $x^*$ (and the matrix $A$ and the
value $\lambda$) and computing a corresponding right hand side $b$.

The basis is the following simple observation which has a one-line
proof:
\begin{lemma}
  Let $A\in\RR^{k\times n}$, $\lambda>0$ and $x^*\in\RR^n$ and let
  $w\in\rg A^T$ fulfill $w \in\Sign(x^*)$. Then it holds: If $y$ is a
  solution to $A^Ty = w$ and $b$ is defined by $b = \lambda y + Ax^*$,
  then $x^*$ is a solution of~\eqref{eq:QPlambda}.
\end{lemma}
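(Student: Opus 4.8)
We want to show that $x^*$ solves $\min_x \frac{1}{2}\|Ax-b\|_2^2 + \lambda\|x\|_1$, where $b = \lambda y + Ax^*$ and $A^T y = w \in \text{Sign}(x^*)$.

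The key tool is the optimality condition (subdifferential). Since the objective is convex, $x^*$ is a minimizer iff $0 \in \partial(\text{objective})(x^*)$.

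The subdifferential of $\frac{1}{2}\|Ax-b\|_2^2$ is $A^T(Ax-b)$ (smooth part).
The subdifferential of $\lambda\|x\|_1$ is $\lambda \text{Sign}(x)$.

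So the condition is: $0 \in A^T(Ax^*-b) + \lambda\text{Sign}(x^*)$, i.e., $-A^T(Ax^*-b) \in \lambda\text{Sign}(x^*)$.

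Now compute: $Ax^* - b = Ax^* - (\lambda y + Ax^*) = -\lambda y$.
So $-A^T(Ax^*-b) = -A^T(-\lambda y) = \lambda A^T y = \lambda w$.

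We need $\lambda w \in \lambda \text{Sign}(x^*)$, i.e., $w \in \text{Sign}(x^*)$. This is given by hypothesis!

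So the proof is indeed one line once you set up the optimality condition.

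**Plan:**

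The plan is to use convex optimality conditions: since $\frac{1}{2}\|A\,\cdot\,-b\|_2^2 + \lambda\|\,\cdot\,\|_1$ is a proper convex function, a point $x^*$ is a global minimizer if and only if zero belongs to the subdifferential at $x^*$. The smooth quadratic term contributes its gradient $A^T(Ax^*-b)$, while the nonsmooth $\ell^1$-term contributes the set $\lambda\Sign(x^*)$; since these two functions are (respectively) finite everywhere, the subdifferential of the sum is the (Minkowski) sum of the subdifferentials, so the optimality condition reads $0 \in A^T(Ax^*-b) + \lambda\Sign(x^*)$.

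First I would substitute the prescribed right hand side. With $b = \lambda y + Ax^*$ we get $Ax^* - b = -\lambda y$, hence $A^T(Ax^*-b) = -\lambda A^Ty = -\lambda w$ by the defining property $A^Ty = w$ of $y$. Plugging this in, the optimality condition becomes $0 \in -\lambda w + \lambda\Sign(x^*)$, i.e.\ $w \in \Sign(x^*)$, which holds by hypothesis. This establishes that $x^*$ is a minimizer of~\eqref{eq:QPlambda}.

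Two small points deserve a remark rather than genuine difficulty. The hypothesis $w\in\rg A^T$ is exactly what guarantees that a vector $y$ with $A^Ty=w$ exists, so the construction is non-vacuous; it plays no further role in the verification itself. And the claim only asserts that $x^*$ \emph{is} a solution, not that it is the unique one, so no strict-convexity or injectivity argument on $A$ is needed. The only conceptual obstacle is citing the subdifferential sum rule and the characterization of minimizers of a convex function via $0\in\subgrad f(x^*)$; both are completely standard, which is why the author can honestly call this a one-line proof.
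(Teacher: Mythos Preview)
Your proof is correct and follows exactly the paper's approach: substitute $b=\lambda y+Ax^*$ into $-A^T(Ax^*-b)$ to obtain $\lambda A^Ty=\lambda w\in\lambda\Sign(x^*)$, which is the necessary and sufficient first-order optimality condition. Your additional remarks on the subdifferential sum rule, the role of the hypothesis $w\in\rg A^T$, and non-uniqueness are accurate and simply flesh out what the paper compresses into one line.
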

\begin{proof}
  Simply check
  \begin{align*}
    -A^T(Ax^*-b) & = -A^T(Ax^* - \lambda y - Ax^*)\\
    & = \lambda A^Ty = \lambda w \in \lambda\Sign(x^*).
  \end{align*}
  Hence $x^*$ fulfills the necessary and sufficient condition for
  optimality.
\end{proof}

\begin{remark}
  The existence of the vector $w$ is exactly the \emph{source
    condition} used in sparse regularization of ill-posed
  problems. There one shows that a vector $x^\dagger$ for which such a
  vector $w$ exists can be reconstructed from noisy measurements
  $b^\delta$ with $\norm[2]{Ax^\dagger- b^\delta}\leq \delta$ by
  solving~\eqref{eq:QPlambda} with $b^\delta$ instead of $b$ and
  $\lambda\asymp\delta$ and that one achieves a linear convergence
  rate, i.e.~for the solution $x_\lambda^\delta$ one gets
  $\norm[1]{x_\lambda^\delta-x^\dagger} = \bigO(\delta)$,
  see~\cite{grasmair2008sparseregularization,lorenz2010beyondconvergence,grasmair2011conditionsell1}.
\end{remark}
The following corollary reformulates the above lemma in a way which is
more suitable for an algorithmic reformulation.
\begin{corollary}
  \label{cor:construct_b}
  Let $\{1,\dots,n\}$ be partitioned into sets $\II$, $\AA_+$ and
  $\AA_-$ and let $x^*\in\RR^n$ be any vector such that
  \begin{equation}
    \label{eq:compliance_x}
    \begin{split}
      x_i^*&>0, \quad i\in \AA_+\\
      x_i^*&<0, \quad i\in \AA_-\\
      x_i^*&=0, \quad i\in \II      
    \end{split}
  \end{equation}
  and let $\lambda>0$. Furthermore assume that $y\in\RR^k$ fulfills
  \begin{equation}
    \label{eq:condition_on_y}
    \begin{split}
      (A^Ty)_i&=1, \quad i\in \AA_+\\
      (A^Ty)_i&=-1 , \quad i\in \AA_-\\
      |A^Ty|_i&\leq 1, \quad i\in \II      
    \end{split}
  \end{equation}
  and define $b = \lambda y + Ax^*$. Then $x^*$ is a solution
  of~\eqref{eq:QPlambda}.
\end{corollary}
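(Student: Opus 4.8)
The plan is to deduce the corollary directly from the preceding lemma by checking that its hypotheses are met. First I would set $w = A^Ty$. The three displayed conditions in \eqref{eq:condition_on_y} say precisely that $w_i = 1$ on $\AA_+$, $w_i = -1$ on $\AA_-$, and $w_i \in [-1,1]$ on $\II$. Next I would combine this with the sign information \eqref{eq:compliance_x} on $x^*$: on $\AA_+$ we have $x_i^* > 0$ and $w_i = 1$; on $\AA_-$ we have $x_i^* < 0$ and $w_i = -1$; and on $\II$ we have $x_i^* = 0$ and $w_i \in [-1,1]$. Comparing with the definition of the multivalued sign $\Sign$, this is exactly the statement $w \in \Sign(x^*)$.

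It remains to observe that $w \in \rg A^T$: this is immediate since $w = A^Ty$ for the given $y \in \RR^k$, so $y$ is itself a solution of the linear system $A^Ty = w$. Thus all hypotheses of the lemma are satisfied, with this particular choice of $w$ and $y$. Since $b$ is defined in the corollary by the same formula $b = \lambda y + Ax^*$ as in the lemma, the lemma yields that $x^*$ solves \eqref{eq:QPlambda}, which is the claim.

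There is essentially no obstacle here — the corollary is a bookkeeping restatement of the lemma in which the existential quantifier ``there exists $w \in \rg A^T$ with $w \in \Sign(x^*)$'' has been unfolded into an explicit partition of the index set together with the componentwise constraints \eqref{eq:condition_on_y} on $y$. The only point requiring a word of care is that the partition $\{1,\dots,n\} = \II \cup \AA_+ \cup \AA_-$ is consistent with $x^*$ in the sense of \eqref{eq:compliance_x}, which is assumed outright; given that, the verification $w \in \Sign(x^*)$ is a case check over the three blocks of indices and nothing more.
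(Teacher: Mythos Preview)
Your proposal is correct and matches the paper's approach: the paper gives no separate proof of the corollary, describing it simply as a reformulation of the lemma, and your argument---setting $w = A^Ty$, verifying $w \in \Sign(x^*) \cap \rg A^T$ by a case check over the three index blocks, and then invoking the lemma---is exactly the intended justification.
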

According to this corollary we can construct an instance
$(A,b,\lambda)$ with known solution $x^*$ as follows:
\begin{enumerate}
\item Specify $A\in\RR^{m\times n}$ and a sign-pattern (given by the
  partition $\AA_+$, $\AA_-$, $\II$).
\item Construct a vector $y\in\RR^m$ which
  fulfills~\eqref{eq:condition_on_y}.
\item Choose any $\lambda>0$ and any $x^*\in\RR^n$ which complies with
  the sign-pattern, i.e.~\eqref{eq:compliance_x} holds.
\item Define $b = \lambda y + Ax$.
\end{enumerate}
The vector $y$ can be constructed by several methods which are outline
in Appendix~\ref{app:algorithm}. These methods have been implemented
in the Matlab package \texttt{L1TestPack} in the function
\verb|construct_bpdn_rhs| \footnote{The package is available at
\url{http://www.tu-braunschweig.de/iaa/personal/lorenz/l1testpack}.}.
One should note that a vector $y$ as in
Corollary~\ref{cor:construct_b} need not to exist. Indeed, for a fixed
matrix $A$ not every sign-pattern of $x^*$ can occur as a minimizer of
any~\eqref{eq:QPlambda}. 
\begin{remark}
  For injective $A$ everything is much simpler: Since $A^T$ is
  surjective, we can just choose some $w\in\Sign(x^*)$, solve
  $A^Ty = w$ and set $b = \lambda y+Ax^*$.
\end{remark}

We discuss advantages and disadvantages
of our approach:\\
\textbf{Advantages:}
\begin{itemize}
\item The algorithm is independent of the value of $\lambda$ while the
  performance of solvers for~\eqref{eq:QPlambda} usually deteriorates
  for smaller $\lambda$, see,
  e.g.~\cite{figueiredo2007gradproj,figueiredo2009sparsa} and
  Section~\ref{sec:infl-param-lambda}.
\item The algorithm is independent of the dynamic range of the optimal
  value $x^*$, however, several experiments have recorded that the
  performance of solvers for~\eqref{eq:QPlambda} depends greatly on
  the dynamic range, see, e.g.~\cite{becker2009nesta} and Section~\ref{sec:infl-dynam-range}.
\item For square matrices $A$ with full rank, one immediately get a
  desired vector $y$ by solving $A^Ty = w$ for some vector
  $w\in\Sign(x^*)$. While this setting is unusual, e.g., in compressed
  sensing, one encounters such situations in regularization with
  sparsity constraints,
  see~\cite{daubechies2003iteratethresh,bredies2008harditer,lorenz2008reglp,griesse2008ssnsparsity,denis2009sparseholograms,ramlau2008regproptikhonov}.
\end{itemize}
\textbf{Disadvantages}
\begin{itemize}
\item The construction of $b$ from $x^*$ leads to a specific noise
  model, namely, the noise is given by $\lambda y$. Hence, there is no
  control about the noise distribution\footnote{However, one observes
    that the noise level $\norm{Ax^*-b} = \lambda \norm{y}$ is
    proportional to $\lambda$ which, again, motivates that one should
    choose $\lambda$ proportional to the noise level.}. This limits the
  use of instances constructed in this way to the comparison of
  solvers for basis pursuit denoising. For other sparse reconstruction
  methods like matching pursuit algorithms they seem to be useless.
\item The algorithm produces one particular element $w\in\Sign(x^*)$
  and it is not clear if this has any additional properties. Usually,
  several $w\in\Sign(x^*)\cap \rg A^T$ exist and probably the
  proposed method favors a particular form of $w$.
\end{itemize}

\section{Illustrative instances}
\label{sec:illustr-inst}

Numerous papers contain comparisons of different solvers for the three
problems \eqref{eq:BPsigma},~\eqref{eq:QPlambda} and \eqref{eq:LStau},
see
e.g.~\cite{vandenberg2008paretofrontiertbasispursuit,figueiredo2007gradproj,hale2008fixedpointcontinuation,yin2008bregmaniterations,beck2009fista,griesse2008ssnsparsity,becker2009nesta,loris2009l1performance}. Hence,
we not aim at yet another comparison of solvers but try to illustrate,
how different features of the measurement matrix and the solution
influence the difficulty of the problem.

From the zoo of available solvers we have chosen four. The choice was
not uniformly at random but to represent four different classes:
fpc~\cite{hale2008fixedpointcontinuation} as a simple tuning of the
basic iterative thresholding algorithm, FISTA \cite{beck2009fista} as
a representative of the ``optimal algorithms'' in the sense of worst
case complexity, GPSR~\cite{figueiredo2007gradproj} as a highly tuned
basic gradient method and YALL1~\cite{yang2009yall1} as a member of
the class of alternating directions methods\footnote{Sources: fpc
  version 2.0 \url{http://www.caam.rice.edu/~optimization/L1/fpc/},
  GPSR version 6.0 \url{http://www.lx.it.pt/~mtf/GPSR/}, YALL1 version
  1.0 \url{http://yall1.blogs.rice.edu/} and an own implementation of
  FISTA.}. All these solvers proceed iteratively and use (basically)
one application of $A$ and one of $A^T$ for each iteration. Hence, the
runtime of these algorithms is mainly related to the number of
iterations. We did not include higher order solvers like
fss~\cite{lee2006featuresignsearch} or
ssn~\cite{griesse2008ssnsparsity} and also did not use any variant of
homotopy approaches~\cite{loris2008l1pack}.

For algorithms we overrode the implemented stopping criteria by
the criterion that the relative error in the reconstruction
\[
R_n = \frac{\norm{x_n-x^*}}{\norm{x^*}}
\]
falls below a given threshold.

 \newcommand{\setup}[7]{
  \begin{description}
  \item[Dimensions:]\mbox{}\\[-\baselineskip]
    \begin{itemize}
    \item $n={#1}$ variables,
    \item $k={#2}$ measurements
    \end{itemize}
  \item[Matrix $A$:]\mbox{}\\
    #3
  \item[Solution $x^*$:]\mbox{}\\
    $s={#4}$ non-zero entries, #5
  \item[$\lambda$:]
    {#6}
  \item[Results:]\mbox{}\\
    #7
  \end{description}
}

\subsection{Influence of the parameter $\lambda$}
\label{sec:infl-param-lambda}

Here we consider a standard example from compressed sensing, namely a
sensing matrix $A$ which consists of random rows of a DCT matrix. The
setup is as follows: \setup{1000}%
{200}%
{Random rows of a DCT matrix}%
{20}%
{magnitude normally distributed with mean zero and variance one.}%
{$10^{-1}$, $10^{-2}$, $10^{-4}$}%
{In general, all solver slow down for smaller values of
  $\lambda$. However, some solvers depend greatly on the size of
  $\lambda$, see Figure~\ref{fig:example1}.}

\begin{figure*}
  \centering
  \begin{tabular}{ccc}
    \includegraphics[page=1]{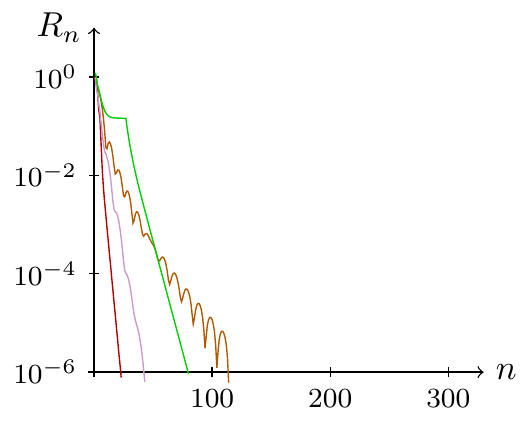}&
    \includegraphics[page=2]{figures_paper/example1}&
    \includegraphics[page=3]{figures_paper/example1}\\
    $\lambda = 10^{-1}$ & 
    $\lambda = 10^{-2}$ & 
    $\lambda = 10^{-4}$  
  \end{tabular}
  \includegraphics{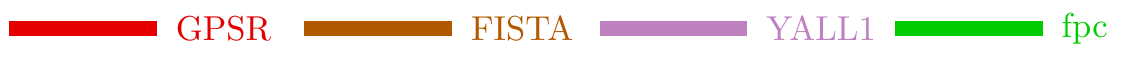}
  \caption{Results for from Section~\ref{sec:infl-param-lambda} on the
    influence of $\lambda$.}
  \label{fig:example1}
\end{figure*}

\subsection{Influence of the sparsity level}
\label{sec:infl-spars-level}

While the construction of a test instance is independent of the
parameter $\lambda$, it gets harder for less sparsity. The behavior of
the solvers with respect to the sparsity level is illustrated by this
example:

\setup{2000}%
{200}%
{Bernoulli ensemble, i.e.~random $\pm 1$}%
{4,\,80}%
{respectively; magnitude normally distributed with mean zero and
  variance one.}%
{$10^{-1}$}%
{Most solvers take longer for less sparsity; however, surprisingly,
  YALL1 is even faster for lower sparsity, see
  Figure~\ref{fig:example2}.}

\begin{figure}
  \centering
  \begin{tabular}{cc}
    \includegraphics[page=1]{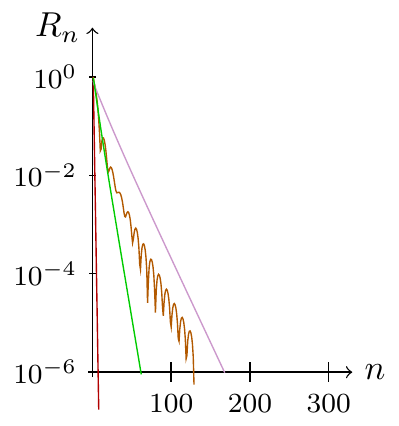}&
    \includegraphics[page=2]{figures_paper/example2}\\
    $s=4$ & 
    $s=80$ 
  \end{tabular}
  \includegraphics{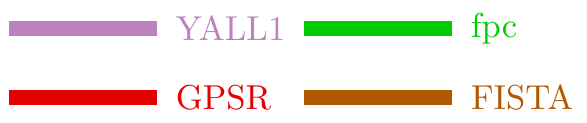}
  \caption{Results for from Section~\ref{sec:infl-spars-level} on the
    influence of the sparsity level $s$.}
  \label{fig:example2}
\end{figure}

\subsection{Influence of the dynamic range of the entries in $x^*$}
\label{sec:infl-dynam-range}

As claimed in the introduction, the \emph{dynamic range}
\[
\Theta(x^*) = \frac{\max\{\abs{x^*}\, :\, x^*\neq 0\}}{\min\{\abs{x^*}\, :\, x^*\neq 0\}}
\]
also influences the performance.

\setup{3000}%
{1000}%
{Union of three orthonormal basis: the identity matrix, the DCT matrix
  and an orthonormalized random matrix}%
{50}%
{with a dynamic range of approximately 9, 701 and 55.000,
  respectively.}%
{$10^{-1}$}%
{Some solvers dramatically slow down for larger dynamic range, see
  Figure~\ref{fig:example3}}

\begin{figure*}
  \centering
  \begin{tabular}{ccc}
    \includegraphics[page=1]{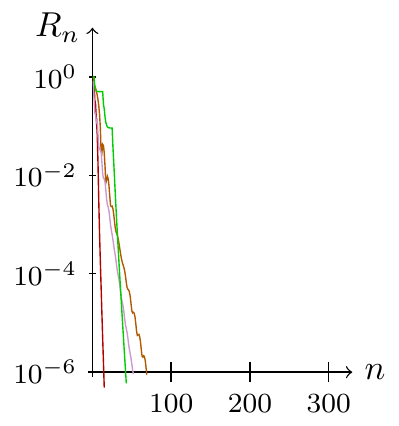}&
    \includegraphics[page=2]{figures_paper/example3}&
    \includegraphics[page=3]{figures_paper/example3}\\
    $\Theta(x^*) \approx 9$ & 
    $\Theta(x^*) \approx 700$ & 
    $\Theta(x^*) \approx 55.000$  
  \end{tabular}
  \includegraphics{figures_paper/legend}
  \caption{Results for from Section~\ref{sec:infl-dynam-range} on the
    influence of the dynamic range.}
  \label{fig:example3}
\end{figure*}

\subsection{Influence of the coherence of $A$}
\label{sec:infl-coher-a}

To illustrate that also a large coherence can cause solvers to slow
down, we have chosen the following setup: We considered square matrices
$A\in\RR^{n\times n}$ which are zero expect on the diagonal and a
certain number $K$ of lower off-diagonals, scaled to have $\norm{A}=1$:
\begin{align*}
  A_K & = c
  \begin{bmatrix}
    1 & 0 & \cdots & \cdots& \cdots& 0\\
    \vdots & \ddots&\ddots & & & \vdots\\
    1 & & \ddots & \ddots & & \vdots\\
    0 & \ddots & &\ddots & \ddots&\vdots\\
    \vdots & \ddots & \ddots&  & \ddots& 0\\
    0 & \cdots & 0 & 1 & \cdots & 1
  \end{bmatrix}.\\[-0.7\baselineskip]
  & \hspace*{20ex}\underbrace{\hspace*{11ex}}_{K\ \text{columns}}
\end{align*}
We also considered the extreme case $K = n$, also known as the
Heaviside matrix. Denoting the columns of $A_K$ by $a_j$, we calculate
the coherence of the matrix $A_K$ as
\[
\mu = \max_{i\neq j}\frac{\scp{a_i}{a_j}}{\norm{a_i}\norm{a_j}} = \sqrt{\frac{K-1}{K}}.
\]

\setup{300}%
{300}%
{Increasingly coherent matrices with $K = 5,\, 40,\, 100,\, 300$}%
{30}%
{Bernoulli, i.e. randomly selected $+1$ and $-1$.}%
{$10^{-1}$}%
{This problem, while with an square and invertible matrix, is known
  the be notoriously hard. Especially for large $K$ all solvers
  deteriorate, see Figure~\ref{fig:example4}.}
\begin{figure*}
  \centering
  \begin{tabular}{cccc}
    \includegraphics[page=1]{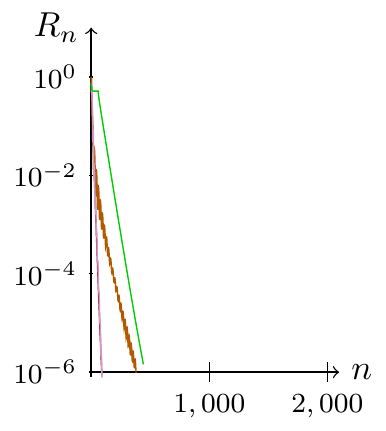}&
    \includegraphics[page=2]{figures_paper/example4}&
    \includegraphics[page=3]{figures_paper/example4}&
    \includegraphics[page=4]{figures_paper/example4}\\
    $K=5$, $\mu=0.89 $& 
    $K=40$, $\mu=0.987 $& 
    $K=100$, $\mu=0.995$ & 
    $K=300$, $\mu=0.998$  
  \end{tabular}
  \includegraphics{figures_paper/legend}
  \caption{Results for from Section~\ref{sec:infl-coher-a} on the
    influence of the coherence.}
  \label{fig:example4}
\end{figure*}


\appendices
\section{Algorithms}
\label{app:algorithm}
Instead of $y\in\RR^m$ we construct a vector $w\in\RR^n$ such that
\[
w \in \rg A^T \cap \Sign(x^*)
\]
which can be reformulated as
\[
\begin{split}
  w_i&=1, \quad i\in \AA_+\\
  w_i&=-1 , \quad i\in \AA_-\\
  |w_i|&\leq 1, \quad i\in \II      
\end{split}
\]
and $w\in\rg A^T$. Then $y$ can be found by solving $A^Ty = w$.

\subsection{Solution by projection onto convex sets}
\label{sec:solut-proj-onto}

The condition $w\in\rg A^T \cap \Sign(x^*)$ can be seen as a convex
feasibility problem~\cite{bauschke1996convexfeasibility} since both
the sets $\rg A^T$ and $\Sign(x^*)$ are convex. Moreover, the
projection onto each set is computationally feasible: The projection
onto the range of $A^T$ can be calculated explicitly, e.g.~with the
help of QR factorization. If $A^T = QR$ with orthonormal $Q$ and upper
triangular $R$, the projection $P_{\rg A^T}$ is given by $P_{rg A^T} =
Q(:,1:k) Q(:,1:k)^T$. Projecting onto the convex set $\Sign(x^*)$ is
even simpler: Set the fixed components to $\pm 1$ respectively and
clip the others by $x\mapsto\max(\min(x,1)x,-1)$. We done the
projection onto $\Sign(x^*)$ by $P_{\Sign(x^*)}$.

Now we find $w$ by alternatingly project an initial guess onto both
sets, a strategy knows as \emph{projection onto convex sets} (POCS)
\cite{cheney1959pocs,gubin1967pocs}. This is given as pseudo code in
Algorithm~\ref{alg:pocs}.
\begin{algorithm}
  \caption{Calculation of $y$ by POCS}
  \label{alg:pocs}
  \begin{algorithmic}[1]
    \REQUIRE Input $A\in\RR^{m\times n}$, a partition $\AA_+$, $\AA_-$
    and $\II$ of $\{1,\dots,n\}$ (coded as $\Sign(x^*)$), a tolerance
    $\epsilon>0$ and an initial guess $w_0$.
    
    \FOR{$i=0,1,\dots$}
    \STATE $v^n = P_{\rg A^T}w^n$
    \STATE $w^{n+1} = P_{\Sign(x^*)}v^n$
    \IF{$\max(\norm{v^n-w^n},\norm{w^{n+1}-v^n})\leq \epsilon$}
    \STATE break
    \ENDIF
    \ENDFOR
    \STATE Solve $A^Ty = w$
    \RETURN $y$
  \end{algorithmic}
\end{algorithm}

\subsection{Solution by quadratic programming}
\label{sec:solut-quadr-progr}

We sketch another approach by quadratic programming: We call $\AA =
\AA_+\cup\AA_-$ the \emph{active set} and $\II$ the \emph{inactive
  set} and define $s\in\RR^\AA$ by
\begin{equation}
  \label{eq:def_s}
  \begin{split}
    s_i&=1, \quad i\in \AA_+\\
    s_i&=-1 , \quad i\in \AA_-.
  \end{split}
\end{equation}
Furthermore we denote with $P_\AA:\RR^n\to\RR^\AA$ the projection
which deletes the ``inactive'' components and with
$P_\II:\RR^n\to\RR^\II$ the projection which deletes in ``active''
components and the respective adjoint $P_\AA^T$ and $P_\II^T$ which
fill up the vectors be zeros. With this notation, we aim at finding
$w\in\rg A^T$ such that
\[
P_\AA w = s,\ \text{ and }\ \norm[\infty]{P_\II w}\leq 1.
\]
To fulfill the condition $w\in\rg A^T$ we use the orthogonal
projection on $\rg A^T$, denoted by $P_{\rg A^T}$ and require $P_{\rg
  A^T}w = w$. Since $w$ is determined on the active set $\AA$ we
rewrite is as
\begin{equation}
  \label{eq:def_w}
  w = P_\AA^Ts + P_\II^T z
\end{equation}
with a $z\in\RR^\II$. Putting this together we have to find a vector
$z\in\RR^\II$ such that
\[
(P_{\rg A^T} - \Id)P_\II^T z = (\Id - P_{\rg A^T})P_\AA^Ts,\quad
\norm[\infty]{z}\leq 1.
\]
We the abbreviations 
\begin{equation}
  \label{eq:def_barP_barv}
  \begin{split}
    \bar P &= (P_{\rg A^T} - \Id)P_\II^T\\
    \bar v &= (\Id -P_{\rg A^T})P_\AA^Ts
  \end{split}
\end{equation}
we reformulate this as the optimization problem
\begin{equation}
  \label{eq:min_prob_infty}
  \min_{z\in\RR^\II} \tfrac12\norm{\bar Pz-\bar v}^2\ \text{s.t.}\ \norm[\infty]{z}\leq 1.
\end{equation}
This quadratic programming or constrained regression problem can be
solved by various methods~\cite{boyd2004convexoptimization} including
the simple gradient projection~\cite{goldstein1965descent} or the
conditional gradient
method~\cite{frank1956quadprog,beck2004conditionalgradient}.  Note
that we require that the optimal value of~\eqref{eq:min_prob_infty} is
indeed zero.

Algorithm~\ref{alg:calc_y} gives pseudo-code for calculating $y$.
\begin{algorithm}
  \caption{Calculation of $y$ by quadratic programming}
  \label{alg:calc_y}
  \begin{algorithmic}[1]
    \REQUIRE Input $A\in\RR^{m\times n}$ and a partition $\AA_+$,
    $\AA_-$ and $\II$ of $\{1,\dots,n\}$.
    
    \STATE Set $s$ according to~\eqref{eq:def_s}.

    \STATE Calculate the projection matrix $P_{\rg A^T}$ (e.g.~by
    QR-factorization or singular value decomposition) and define $\bar
    P$ and $\bar v$ according to~\eqref{eq:def_barP_barv}.

    \STATE Calculate $z$ as a solution of~\eqref{eq:min_prob_infty}.

    \IF{$\bar P z = \bar v$}

    \STATE calculate $w$ according to \eqref{eq:def_w}

    \ELSE

    \STATE return ``Error: No solution with this sign-pattern''

    \ENDIF
    \STATE Solve $A^Ty = w$
    \RETURN $y$
  \end{algorithmic}
\end{algorithm}

\bibliographystyle{plain}
\bibliography{/home/dirloren/texmf/bibliography/literature}

\end{document}